\documentclass[11pt,reqno]{amsart}
\usepackage[foot]{amsaddr}

  \usepackage{cite}
  \usepackage[T1]{fontenc}        

\usepackage{orcidlink}
\usepackage{amsmath}
\usepackage{amsthm}
\usepackage{amssymb}
\usepackage{amsfonts}
\usepackage{eucal}
\usepackage{array}
\usepackage{enumerate}
\usepackage{url}
\usepackage{tikz}
\usetikzlibrary {arrows.meta,automata,positioning}

\usepackage{enumitem}
\setlist[enumerate]{label*=(\arabic*)}

\usepackage[capitalise]{cleveref}
\Crefname{enumi}{}{}
\crefname{enumi}{}{}

\usepackage[all]{xy}

\DeclareMathOperator{\MCA}{\mA}                               
\DeclareMathOperator{\MBZ}{\mathbb{Z}}                               
\DeclareMathOperator{\Pow}{\mathcal{P}}                               
\DeclareMathOperator{\NN}{In}                               

\DeclareMathOperator{\dd}{deg}                               
\DeclareMathOperator{\defect}{dfc}                               
\DeclareMathOperator{\Replace}{Repl}

\newtheorem{theorem}{Theorem}

\newtheorem{lemma}[theorem]{Lemma}

\newcommand{\sa}{synchronizing automata}
\newcommand{\san}{synchronizing automaton}
\newcommand{\cra}{completely reachable automata}
\newcommand{\cran}{completely reachable automaton}

\newcommand{\scn}{strongly connected}

\usepackage{multirow}
\usepackage{rotating}

\usepackage{mathptmx}      
\DeclareSymbolFont{rsfscript}{OMS}{rsfs}{m}{n}
\DeclareSymbolFontAlphabet{\mathrsfs}{rsfscript}

\newcommand{\mA}{\mathrsfs{A}}

\newcommand{\mC}{\mathrsfs{C}}

\begin{document}

\title{Completely Reachable Road Coloring}
\author{Mikhail V. Volkov\orcidlink{0000-0002-9327-243X} \and Yinfeng Zhu\orcidlink{0000-0003-1724-5250}}

\address{\normalfont Institute of Natural Sciences and Mathematics, Ural Federal University, Ekaterinburg, Russia}
\email{m.v.volkov@urfu.ru, yinfeng.zhu7@gmail.com}

\begin{abstract}
We characterize the digraphs that admit an edge labeling by letters from a finite alphabet such that the resulting labeled digraph is a completely reachable automaton. This class of digraphs can be recognized in polynomial time. In contrast, we show that, for every fixed alphabet size, the problem of deciding whether a digraph admits an edge labeling with the same property is NP-complete. We also classify the digraphs for which every edge labeling results in a completely reachable automaton.

\keywords{Finite automaton; Complete reachability; Digraph; Road coloring; NP-completeness; Hamiltonian cycle}
\end{abstract}

\maketitle

\section*{Introduction}

\emph{Completely reachable automata} are complete deterministic finite automata in which every nonempty subset of the state set occurs as the image of the whole state set under the action of some input word. Such automata arose in studies of the descriptional complexity of formal languages, first implicitly in~\cite{Maslennikova:2012} and later explicitly in~\cite{BV16}.

In~\cite{BV16}, Bondar and the first-named author posed three problems concerning \cra. The first was the problem of efficiently recognizing such automata: is there a polynomial-time algorithm that, given an automaton $\mA$, determines whether $\mA$ is completely reachable? This question was recently answered in the affirmative by Ferens and Szyku\l{}a~\cite{FS23,FS26}. The other two problems were inspired by well-known questions in the theory of \sa, of which \cra{} form a special class.

In connection with the \v{C}ern\'y conjecture (cf. \cite[Sect.~3]{Vo22}), it was suggested in~\cite{BV16} to investigate the following question: if $\mA$ is a \cran{} with $n$ states, how can the length of a shortest word with an $m$-element image ($1\le m<n$) be upper-bounded in terms of $n$ and $m$? Don's conjecture~\cite{Don16}, according to which $n(n-m)$ is such an upper bound, was refuted by the second-named author~\cite{Zhu24}; on the other hand, Ferens and Szyku\l{}a~\cite{FS26} obtained the upper bound
$2n(n-m)-n(1+\frac12+\cdots+\frac1{n-m})$.

Motivated by the Road Coloring Conjecture of Adler, Goodwyn, and Weiss~\cite{AGW77}, subsequently proved by Trahtman~\cite{Tra09}, it was asked in~\cite[Sect.~5]{BV16} which digraphs admit an edge labeling by letters from a finite alphabet such that the resulting labeled digraph becomes a \cran. This is the question we address, thereby completing the research program proposed in~\cite{BV16}.

We assume that the reader is familiar with basic concepts of computational complexity such as polynomial-time reduction and NP-completeness.

\section{Background and overview}
\label{sec:background}

\subsection{Automata }All automata in this paper are complete, deterministic, and finite. Such an automaton (a \emph{DFA}) is a pair $(Q,\Sigma)$ of finite nonempty sets equipped with a map $Q\times\Sigma\to Q$. The elements of $Q$ and $\Sigma$ are referred to as \emph{states} and \emph{letters}, respectively. The map $Q\times\Sigma\to Q$ is called the \emph{action} of letters on states. For a state $q\in Q$ and a letter $a\in\Sigma$, the result of the action of $a$ on $q$ is denoted by $q{\cdot}a$.

A \emph{word} over $\Sigma$ is a finite sequence of letters. We allow the \emph{empty word}, denoted by $\varepsilon$, which is the empty sequence. The set of all words over $\Sigma$ is denoted by $\Sigma^*$. For $q\in Q$ and $\sigma\in\Sigma^*$, the action of $\sigma$ on $q$ in  $(Q,\Sigma)$ is defined recursively by
\[
q{\cdot}w:=\begin{cases}
               q & \text{if } \sigma=\varepsilon,\\
              (q{\cdot}\sigma'){\cdot}a & \text{if } \sigma=\sigma'a \text{ for some } \sigma'\in\Sigma^* \text{ and } a\in\Sigma,
             \end{cases}
\]
and the action of $\sigma$ on a nonempty subset $P\subseteq Q$ is defined by
\[
P{\cdot}\sigma:=\{p{\cdot}\sigma : p\in P\}.
\]

A subset $P \subseteq Q$ is called \emph{reachable} in $(Q,\Sigma)$ if there exists a word $\sigma\in\Sigma^*$ such that $P = Q{\cdot}\sigma$. A DFA is \emph{completely reachable} if every nonempty subset of its state set is reachable. A DFA $(Q,\Sigma)$ is \emph{synchronizing} if some singleton set is reachable in $(Q,\Sigma)$. Obviously, every \cran{} is synchronizing.

\subsection{Digraphs and colorings}  A \emph{digraph} is a quadruple $(V,E,s,t)$, where $V$ and $E$ are sets and $s,t\colon E\to V$ are functions. The elements of $V$ and $E$ are called \emph{vertices} and \emph{edges}, respectively. For an edge $e\in E$, the vertices $s(e)$ and $t(e)$ are called the \emph{source} and the \emph{target} of $e$, respectively, and we say that $e$ is an \emph{edge from $s(e)$ to $t(e)$}. Note that, under this definition, for any $u,v\in V$, there may be several edges from $u$ to $v$; we call such edges \emph{parallel}.

A \emph{path of length} $\ell$ from $u_0$ to $u_\ell$ is an alternating sequence $u_0,e_1,u_1,e_2,\dots,u_{\ell-1}$, $e_\ell,u_\ell$
of vertices $u_0,\dots,u_\ell$ and edges $e_1,\dots,e_\ell$ in which $s(e_i)=u_{i-1}$ and $t(e_i)=u_i$ for each $i=1,\dots,\ell$. A digraph is \emph{\scn} if, for every pair of distinct vertices $u,v$, there exists a path from $u$ to $v$.

A \emph{labeling} of a digraph $(V,E)$ with label set $\Lambda$ is a map $E\to\Lambda$.

For a set $\Sigma$, let $\Pow(\Sigma)$ denote the set of all nonempty subsets of $\Sigma$. An arbitrary DFA $(Q,\Sigma)$ can be represented as a labeled digraph with vertex set $Q$, edge set
\[
E:=\{(p,q)\in Q\times Q : (\exists\, a\in\Sigma)\, p{\cdot}a=q \},
\]
the functions $s,t\colon E\to V$ defined by $s(p,q):=p$, $t(p,q)=q$, and labeling with label set $\Pow(\Sigma)$ defined by $(p,q)\mapsto\{a\in\Sigma: p{\cdot}a=q \}$. For an illustration, Fig.~\ref{fig:C4} shows the labeled digraph of the \v{C}ern\'{y} automaton $\mC_4$ \cite[Fig.~3]{Cerny:1964}, which has states $0,1,2,3$, letters $a,b$, and the action of letters on states defined by
\[
0{\cdot}a=0{\cdot}b:=1,\ \ m{\cdot}a:=m,\ \ m{\cdot}b:=m+1\mkern-12mu\pmod 4 \ \text{for }  m=1,2,3.
\]
\begin{figure}[ht]
\centering
\begin{tikzpicture}[scale=1,
    x=1.2mm, y=1.2mm,
    >=stealth,
    auto,
    state/.style={circle, draw, inner sep=1.5pt, minimum size=6mm},
    every edge/.style={draw, ->,thick}
]
    \node[state] (A) at (0,18) {0};
    \node[state] (B) at (18,18) {1};
    \node[state] (C) at (18,0) {2};
    \node[state] (D) at (0,0) {3};

   \path (A) edge node {$a,b$} (B)
          (B) edge node {$b$}   (C)
          (C) edge node {$b$}   (D)
          (D) edge node {$b$}   (A);

\draw[->,thick] (B) to[out=15,in=75,looseness=8]
  node[above right] {$a$} (B);

\draw[->,thick] (C) to[out=-75,in=-15,looseness=8]
  node[below right] {$a$} (C);

\draw[->,thick] (D) to[out=-105,in=-165,looseness=8]
  node[below left] {$a$} (D);
\end{tikzpicture}
\vspace{-6mm}
\caption{The digraph of the automaton $\mathrsfs{C}_4$}
\label{fig:C4}
\end{figure}
(In Fig.~\ref{fig:C4} and below we omit \{ \} in the edge labels.)

The digraph obtained from the labeled digraph of a DFA $\mA$ by omitting the labels is called the \emph{underlying digraph} of $\mA$. We shall call a DFA \emph{\scn} whenever its underlying digraph is \scn.

Due to completeness and determinism, in the labeled digraph of a DFA $(Q,\Sigma)$, the labels of edges with a common source form a partition of $\Sigma$. This observation admits a converse in the following sense. Suppose that a labeling $\alpha$ of a digraph $G=(V,E,s,t)$ assigns nonempty subsets of a set $\Sigma$ to edges in such a way that, for every vertex $v\in V$, the family $\{\alpha(e): s(e)=v\}$ forms a partition of $\Sigma$; such a labeling is called a \emph{road coloring}. Every road coloring $\alpha$ defines an action of $\Sigma$ on~$V$: for $v\in V$ and $a\in\Sigma$, set $v{\cdot}a:=t(e)$, where $e$ is the unique edge such that $s(e)=v$ and the set $\alpha(e)$ contains~$a$. We thus obtain a DFA, referred to as a~\emph{coloring} of the digraph $G$ and denoted by $\MCA(G,\alpha)$.

\subsection{Road coloring problems and our contribution}

 Given a class $\mathbf{C}$ of automata, one may ask for which digraphs $G$ there exists a road coloring $\alpha$ the DFA $\MCA(G,\alpha)$ belongs to $\mathbf{C}$.  One may also ask for which digraphs $G$ \textbf{every} road coloring $\alpha$ is such that the DFA $\MCA(G,\alpha)$ belongs to $\mathbf{C}$.  We will refer to these questions as the \emph{road coloring problem} (RCP) and the \emph{universal road coloring problem}, respectively, for the class $\mathbf{C}$.

With this terminology, the famous Road Coloring Conjecture of Adler, Goodwyn, and Weiss~\cite{AGW77}, confirmed by Trahtman~\cite{Tra09}, concerns the RCP for the class $\mathbf{SCS}$ of \scn{} \sa. The name of the problem stems from the following property of \scn{} \sa{} which admits a `practical' interpretation. In every \scn\ \san\ $\mA=(Q,\Sigma)$, one can assign to each state $q\in Q$ a word $\sigma_q\in\Sigma^*$ such that, by following the path whose labels spell $\sigma_q$, one is guaranteed to arrive at $q$ from any initial state. Indeed, since $\mA$ is synchronizing, there exist a state $p\in Q$ and a word $\tau\in\Sigma^*$ such that $Q{\cdot}\tau=\{p\}$; since $\mA$ is \scn, there exists a word $\upsilon\in\Sigma^*$ such that $p{\cdot}\upsilon=q$. Then the word $\tau\upsilon$ has the desired property that $r{\cdot}\tau\upsilon=q$ for every $r\in Q$.

Suppose now that a bus carrying tourists arrives in a city and parks at some location. The tourists are then free to explore the city on their own, after which they must return to the bus. If the city's map can be colored into a \san, then every tourist can find their way back to the bus from wherever they happen to be by following the same sequence of colors, determined solely by the location where the bus is parked.

The mathematical motivation for the Road Coloring Conjecture in~\cite{AGW77} came from a question in symbolic dynamics, but in~\cite[Sect.~5]{AGW77}, the authors were able to resolve that question by a different route, bypassing the conjecture altogether. We believe that road coloring problems are interesting \emph{per se}, even without external motivation, as a kind of `reverse engineering' of automata: we aim to understand the extent to which certain important properties of automata are determined by their underlying digraphs.

We consider the RCP for the class $\mathbf{CR}$ of \cra. Although \cra{} form a subclass of \scn{} \sa, our solution to the RCP for \cra{} is not a specialization of Trahtman's. Indeed, a criterion guaranteeing that a coloring of a DFA belongs to a class $\mathbf{C}$ of automata yields, for a subclass $\mathbf{C}'\subset\mathbf{C}$, only a necessary condition for membership in $\mathbf{C}'$. Moreover, it was already observed in~\cite{BV16} that the RCPs for $\mathbf{SCS}$ and $\mathbf{CR}$ differ essentially. Trahtman's result implies that the number of colors plays no role in the case of $\mathbf{SCS}$: if a \scn{} digraph $G$ admits a synchronizing coloring, then it also admits one with the minimum possible number of colors, namely, the maximum out-degree of the vertices of $G$. In contrast,~\cite[Fig.~9]{BV16}, reproduced below as Fig.~\ref{fig:example}, presents a digraph that has no completely reachable coloring with two letters but does admit such a coloring with three letters.

\begin{figure}[hb]
\centering
\begin{tikzpicture}[scale=0.95,
  >=stealth,
  state/.style={circle,draw,minimum size=7mm,inner sep=0pt}
]

\node[state] (A2) at (0,0) {1};
\node[state] (B2) at (2,0) {2};
\node[state] (C2) at (0,2) {4};
\node[state] (D2) at (2,2) {3};

\draw[->,thick] (A2) to[bend right=20] (B2);
\draw[->,thick] (B2) -- (A2);
\draw[->,thick] (C2) -- (A2);
\draw[->,thick] (B2) -- (D2);
\draw[->,thick] (D2) -- (C2);

\draw[->,thick] (A2) edge[loop left] ();

\node[state] (A3) at (5,0) {1};
\node[state] (B3) at (7,0) {2};
\node[state] (C3) at (5,2) {4};
\node[state] (D3) at (7,2) {3};

\draw[->,thick]
(A3) to[bend right=20]
node[midway,below] {$b,c$}
(B3);

\draw[->,thick]
(B3) --
node[midway,above] {$c$}
(A3);

\draw[->,thick]
(C3) --
node[midway,left] {$a,b,c$}
(A3);

\draw[->,thick]
(B3) --
node[midway,right] {$a,b$}
(D3);

\draw[->,thick]
(D3) --
node[midway,above] {$a,b,c$}
(C3);

\draw[->,thick]
(A3) edge[loop left] node {$a$} ();
\end{tikzpicture}

\caption{The digraph on the left has no completely reachable coloring with two letters but admits a~completely reachable coloring with three letters shown on the right.}
\label{fig:example}
\end{figure}

We give a combinatorial characterization of digraphs admitting a~completely reachable coloring in Theorem~\ref{thm:CRC}; this characterization can be verified in polynomial time. On the other hand, Theorem~\ref{thm:NPC} shows that, for every fixed $k\ge 2$, the problem of deciding whether a given digraph admits a completely reachable coloring with $k$ colors is NP-complete. Our final result, Theorem~\ref{thm:ACR}, solves the universal road coloring problem for the class $\mathbf{CR}$. To the best of our knowledge, this problem remains open for the class of \sa---no combinatorial characterization is known for \emph{totally synchronizing} digraphs, that is, digraphs all of whose colorings are synchronizing.

\subsection{Simplification and simple digraphs}
\label{subsec:simple}
Our definition of a digraph allows parallel edges. However, in some considerations related to completely reachable colorings, one can restrict attention to digraphs without parallel edges using the following immediate reduction.

The \emph{simplification} of a digraph $G=(V,E,s,t)$ is the digraph $\bar{G}:=\bigl(V,E/{\sim},\bar s,\bar t\bigr)$, where the equivalence $\sim$ on $E$ is defined by
\[
e\sim f \quad\Longleftrightarrow\quad s(e)=s(f)\ \text{and}\ t(e)=t(f),
\]
and $\bar s([e]):=s(e)$, $\bar t([e])=t(e)$ for every equivalence class $[e]\in E/{\sim}$. A digraph $G$ is \emph{simple} if $\bar G=G$, that is, if $\sim$ is the equality relation.

If $\alpha\colon E\to\Pow(\Sigma)$ is a road coloring of $G=(V,E,s,t)$, then the map $\bar\alpha$ defined by $\bar{\alpha}([e]):=\bigcup_{f\in [e]}\alpha(f)$ is a road coloring of  $\bar{G}$, and the automata $\MCA(G,\alpha)$ and $\MCA(\bar{G},\bar\alpha)$ coincide as they have the same transition function.

Conversely, if $\beta\colon E/{\sim}\,\to\Pow(\Sigma)$ is a road coloring of the digraph $\bar{G}$ and $m:=\max\{|[e]|: e\in E\}$, one can extend $\beta$ to a road coloring $\alpha$ of $G$ with the label set $\Pow(\Theta)$, where $\Theta:=\Sigma_1\cup\dots\cup\Sigma_m$ is the union of $m$ disjoint copies of $\Sigma$, as follows. For every equivalence class $[e]\in E/{\sim}$, fix some order $e_1,\dots,e_k$, $k\le m$, of the edges in $[e]$, and let
\[
\alpha(e_i):=\bigl(\beta([e])\bigr)_i\ \text{ for $i<k$, and }\ \alpha(e_k):=\bigcup_{k\le i\le m}(\beta([e])\bigr)_i.
\]
Here $\bigl(\beta([e])\bigr)_i$ stands for the copy of the subset $\beta([e])\subseteq\Sigma$ in the set $\Sigma_i$. The automata $\MCA(\bar{G},\beta)=(V,\Sigma)$ and $\MCA(G,\alpha)=(V,\Theta)$ are essentially the same; both have $\bar{G}$ as their underlying digraph, and each letter $a\in\Sigma$ has $m$ copies in $\Theta$, each acting the same as $a$. In particular, the DFA $(V,\Sigma)$ is completely reachable if and only if so is $(V,\Theta)$.

We conclude that a digraph admits a completely reachable coloring if and only if its simplification does. Similarly, every coloring of a digraph is completely reachable if and only if every coloring of its simplification is.

\section{Colorings with unrestricted number of colors}
\label{sec:CRC}

A path $u_0,e_1,u_1,\dots,e_\ell,u_\ell$ in a digraph $G=(V,E,s,t)$ is called a \emph{cycle} if $u_\ell=u_0$. The \emph{period} of $G$ is the greatest common divisor of the lengths of all cycles in $G$. A digraph is \emph{aperiodic} if its period is $1$. For a vertex $u\in V$, an \emph{in-neighbor} of $u$ is any vertex $v$ such that $G$ has an edge from $v$ to $u$. For a subset $T\subseteq V$, we denote by $\NN_G(T)$ the set of all in-neighbors of the vertices in $T$. A subset $T\subseteq V$ is \emph{absorbing} if $|T|\le|\NN_G(T)|$.

\begin{theorem} \label{thm:CRC}
A digraph $G=(V,E)$ admits a completely reachable coloring if and only if $G$ is strongly connected, aperiodic, and every subset of\/ $V$ is absorbing. These conditions can be verified in time polynomial in\/ $|V|$, $|E|$.
\end{theorem}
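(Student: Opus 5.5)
For necessity, let $\alpha$ be a road coloring such that $\MCA(G,\alpha)=(V,\Sigma)$ is completely reachable, and derive each condition from the reachability of a suitable subset.

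\emph{Strong connectivity.} If $G$ is not \scn, its condensation has a sink component $D\neq V$, which no edge leaves. Then $Q{\cdot}\sigma\supseteq D{\cdot}\sigma\ne\varnothing$ with $D{\cdot}\sigma\subseteq D$ for every $\sigma$. Hence no singleton outside $D$ is reachable.

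\emph{Aperiodicity.} If the period is $p>1$, then $V$ splits into classes $V_0,\dots,V_{p-1}$ with every edge going from some $V_i$ to $V_{i+1 \bmod p}$. A word of length $\ell$ maps $V_i$ into $V_{i+\ell}$, so every image $Q{\cdot}\sigma$ meets all $p$ classes. Thus no singleton is reachable.

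\emph{Absorbing subsets.} Every vertex has an out-edge, so $\NN_G(V)=V$. For a nonempty proper $T\subset V$, write $T=Q{\cdot}\sigma' a$. The preimages under $a$ of distinct elements of $T$ are disjoint and nonempty, and they consist of in-neighbours of $T$. Hence $|T|\le|a^{-1}(T)|\le|\NN_G(T)|$.

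For sufficiency, I first pass to the simplification $\bar G$, which is allowed by the reduction in Subsection~\ref{subsec:simple}. Then I use as many letters as needed. Let $\Sigma$ be the set of all \emph{compatible maps} $f\colon V\to V$, i.e.\ maps such that $G$ has an edge from $v$ to $f(v)$ for every $v$. Label each edge $e$ by the set of those $f$ with $f(s(e))=t(e)$. Each such set is nonempty, and for each vertex these sets partition $\Sigma$, so this is a road coloring. Its automaton realizes every compatible map as a letter.

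The key step is the following claim: if $P\ne\varnothing$ and $\NN_G(P)$ is reachable, then $P$ is reachable. Indeed, Hall's theorem applied to the bipartite "in-neighbour" relation between $P$ and $\NN_G(P)$ uses exactly the absorbing property of all subsets of $P$. It gives an injection $m\colon P\to\NN_G(P)$ with an edge $m(x)\to x$ for each $x\in P$. Define $f(m(x)):=x$. Send every other vertex of $\NN_G(P)$ to one of its out-neighbours in $P$, and send the remaining vertices anywhere along an edge. Then $f$ is compatible and $f(\NN_G(P))=P$.

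Now iterate. The sets $P,\NN_G(P),\NN_G^2(P),\dots$ satisfy $\NN^{i}_G(P)=\{v:\text{there is a walk of length } i \text{ from } v \text{ into } P\}$. Since $G$ is \scn\ and aperiodic, its adjacency matrix is primitive. So there is $L$ such that walks of length $L$ join all pairs of vertices, which gives $\NN_G^L(P)=V$. Since $V$ is reachable by the empty word, applying the claim $L$ times shows that $P$ is reachable. Hence the coloring is completely reachable.

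\emph{Complexity.} Strong connectivity is checked via depth-first search. The period is the gcd of the values $d(u)+1-d(v)$ over all edges $u\to v$, where $d$ denotes BFS levels from a fixed root. By Hall's theorem, every subset is absorbing if and only if the bipartite graph $V\times V$ with the edges of $G$ has a perfect matching, which is a polynomial matching computation.

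The only genuinely nontrivial step is the sufficiency construction. Its crux is recognizing that Hall's condition lets each $P$ be obtained as a compatible image of $\NN_G(P)$, and that primitivity makes the iterated in-neighbourhoods reach $V$.
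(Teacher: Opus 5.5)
Your proof is correct and follows essentially the paper's route: Hall's theorem yields a letter mapping $\NN_G(P)$ onto $P$, and iterating in-neighbourhoods, which reach $V$ by primitivity (the paper uses Wielandt's explicit bound $(n-1)^2+1$), shows that every nonempty $P$ is reachable. The differences are cosmetic. You take all compatible maps as the alphabet instead of one matching-induced map $f_W$ per nonempty subset $W$. For necessity, you argue strong connectivity via a sink component and prove aperiodicity directly via the cyclic classes, where the paper uses the synchronizing word and cites Laemmel and Adler--Goodwyn--Weiss.
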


\begin{proof} \emph{Necessity}. Let $\MCA(G,\alpha)=(V,\Sigma)$ be a completely reachable coloring of $G$.

For any two vertices $p,q\in V$, if $\sigma\in\Sigma^*$ satisfies $V{\cdot}\sigma=\{q\}$, then $p{\cdot}\sigma=q$. Therefore, the path in $G$ whose labels in $\MCA(G,\alpha)$ spell $\sigma$ connects $p$ to $q$. Hence, $G$ is strongly connected. Moreover, since the one-element subset $\{q\}$ is reachable, the DFA $\MCA(G,\alpha)$ is synchronizing. That the underlying digraph of any \scn{} \san{} is aperiodic was already shown in Laemmel's report \cite{Laem63} and rediscovered in~\cite{AGW77}. Hence, $G$ is aperiodic.

It remains to verify that every subset $T\subseteq V$ is absorbing. This is clear if $T$ is empty. As $G$ is the underlying digraph of a DFA, each vertex of $G$ is the source of an edge. Hence, $\NN_G(V)=V$, and $V$ is absorbing. Now let $T$ be a proper nonempty subset. Since the DFA $\MCA(G,\alpha)$ is completely reachable, $T=V{\cdot}\sigma$ for some word $\sigma$ that can be written as $\sigma=\sigma'a$ with $\sigma'\in\Sigma^*$ and $a\in\Sigma$. Let  $S:=V{\cdot}\sigma'$. Then $S{\cdot}a=T$, which means that every vertex $t\in T$ is the target of an edge labeled $a$ in $\MCA(G,\alpha)$ whose source lies in $S$. Hence $S\subseteq \NN_G(T)$, and since $|T|\le|S|$, the set $T$ is absorbing.

\smallskip

\emph{Complexity}. Algorithms for verifying the strong connectivity and finding the period of a digraph $G=(V,E,s,t)$ in time $O(|V|+|E|)$ are well known; see, for example, \cite{BJG09}, Sects. 5.2 and 17.8, respectively. Using these algorithms, one can check in time $O(|V|+|E|)$ whether $G$ is strongly connected and aperiodic.

To check whether every subset of $V$ is absorbing, we use the \emph{bipartite representation} $BP(G)$ of $G$. This is the bipartite graph $(V',V'',L)$ whose two parts $V':=\{v': v\in V\}$ and $V'':=\{v'': v\in V\}$ are copies of $V$ and whose link set is
\[
L:=\bigl\{(u',w'')\in V'\times V'' : (\exists\, e\in E)\, s(e)=w \land t(e)=u\bigr\};
\]
see Fig.~\ref{fig:bipartite} for an illustration. (We refer to the elements of $L$ as \emph{links} so that the word `edge' unambiguously refers to an edge of a digraph.)

\begin{figure}[ht]
\centering
\begin{tikzpicture}[scale=0.99,
  >=stealth,
  state/.style={circle,draw,minimum size=7mm,inner sep=0pt}
]

\node[state] (A) at (0,0) {1};
\node[state] (B) at (2.5,0) {2};
\node[state] (C) at (0,2.5) {4};
\node[state] (D) at (2.5,2.5) {3};

\draw[->,thick] (A) to[bend right=20] (B);
\draw[->,thick] (B) -- (A);
\draw[->,thick] (C) -- (A);
\draw[->,thick] (B) -- (D);
\draw[->,thick] (D) -- (C);
\draw[->,thick] (A) edge[loop left] ();

\begin{scope}[xshift=5cm]

\node[state] (A1) at (0,3) {$1'$};
\node[state] (A2) at (0,2) {$2'$};
\node[state] (A3) at (0,1) {$3'$};
\node[state] (A4) at (0,0) {$4'$};

\node[state] (B1) at (4,3) {$1''$};
\node[state] (B2) at (4,2) {$2''$};
\node[state] (B3) at (4,1) {$3''$};
\node[state] (B4) at (4,0) {$4''$};

\draw[thick] (A2) -- (B1); 
\draw[thick] (A1) -- (B2); 
\draw[thick] (A1) -- (B4); 
\draw[thick] (A3) -- (B2); 
\draw[thick] (A4) -- (B3); 
\draw[thick] (A1) -- (B1); 

\end{scope}
\end{tikzpicture}
\caption{A digraph and its bipartite representation}
\label{fig:bipartite}
\end{figure}
For a subset $T\subseteq V$, we denote by $T'$ and $T''$ its copies in the parts $V'$ and $V''$, respectively, and for a subset $U\subseteq V'$, we denote by $N(U)$ the set of vertices in $V''$ adjacent to at least one vertex in $U$. By construction, $N(T')=\NN_G(T)''$ for every subset $T\subseteq V$. Therefore $T$ is absorbing if and only if $|T'|\le|N(T')|$. Hence, every subset of $V$ is absorbing if and only if $|U|\le|N(U)|$ for every $U\subseteq V'$. By Hall's marriage theorem (see \cite[Theorem~4.11.3]{BJG09}), this is precisely the condition for the graph $BP(G)$ to admit a matching that covers $V'$. (A \emph{matching} in a bipartite graph $B$ is a set of links without common vertices; a matching $M$ \emph{covers} a part of $B$ if each vertex of this part appears in a link from $M$.) The existence of such a matching can be decided in $O(\sqrt{|V|}|E|)$ time using the Hopcroft--Karp algorithm \cite[Sect.~25.1]{Cormen}.

\smallskip

\emph{Sufficiency}. As explained in Sect.~\ref{subsec:simple}, we may assume that $G$ is simple. We again use the bipartite graph $BP(G)$. For each nonempty subset $W\subseteq V$, consider the induced subgraph $H$ of $BP(G)$ with vertex set $W'\cup\NN_G(W)''$. Since every subset of $V$ is absorbing, we have $|U|\le|N(U)|$ for every $U\subseteq W'$. Clearly, $N(U)\subseteq \NN_G(W)''$, and therefore the graph $H$ satisfies the condition of Hall's marriage theorem. Hence, $H$ admits a matching that covers $W'$. Fix such a matching $M$ and define a function $f_W\colon V''\to V'$ as follows:
\begin{itemize}
    \item if $x$ is covered by a link $(x,y)\in M$, then $f_W(x):=y$;
    \item if $x\in \NN_G(W)''$ is not covered by the matching $M$, then $f_W(x)$ is an arbitrary vertex in $W'$ adjacent to $x$;
    \item if $x\in V''\setminus \NN_G(W)''$, then $f_W(x)$ is an arbitrary vertex in $V'$ adjacent to $x$.
\end{itemize}
By construction, $W'=f_W(\NN_G(W)'')$.

Define a labeling $\alpha\colon E\to\Pow(\Sigma)$, where $\Sigma:=\Pow(V)$, by
\[
\alpha(e):=\{U\in\Sigma: f_U(s(e)'')=t(e)'\}.
\]
By construction, for every $v\in V$, the set $\alpha(e)$, where $s(e)=v$, contains $U\in\Sigma$ if and only if $f_U(v'')=t(e)'$. Therefore, our assumption that the digraph $G$ is simple ensures that the family $\{\alpha(e): s(e)=v\}$ forms a partition of $\Sigma$, i.e., that $\alpha$ is a road coloring. We thus obtain the DFA $\MCA(G,\alpha)=(V,\Sigma)$, in which the vertices of $V$ are the states and the nonempty subsets of $V$ are the letters. In $\MCA(G,\alpha)$ we have
\begin{equation}\label{eq:selfmap}
    \NN_G(U){\cdot}U = U
\end{equation}
for every nonempty subset $U\subseteq V$. (On the left-hand side of \eqref{eq:selfmap} $U$ occurs as a letter from $\Sigma$ acting on the set  $\NN_G(U)$; on the right-hand side, $U$ is a set of states.)

By Wielandt's theorem on the minimum positive power of an irreducible nonnegative $n\times n$ matrix \cite[p. 648]{Wi50}, every strongly connected aperiodic digraph with $n$ vertices has a path of length $(n-1)^2+1$ between every pair of vertices. If, for an arbitrary nonempty subset $U_0\subseteq V$, we define $U_\ell:= \NN_G(U_{\ell-1})$ for all $\ell\ge 1$, then $U_\ell$ contains all vertices $w$ for which there exists a path of length $\ell$ between $w$ and a vertex of $U_0$. Hence $U_k = V$, where $k:=(|V|-1)^2+1$. From \eqref{eq:selfmap}, it follows that
\[
V{\cdot}U_{k-1} U_{k-2} \cdots U_1 U_0=U_0,
\]
in particular, the subset $U_0$ is reachable. Hence, the DFA $\MCA$ is completely reachable, and the digraph $G$ admits a completely reachable coloring.
\end{proof}

For a given digraph $G=(V,E,s,t)$, the completely reachable coloring constructed in Theorem~\ref{thm:CRC} uses $2^{|V|-1}$ colors. Of course, this number of colors is far from optimal. However, as the result of the next section shows, the problem of finding a completely reachable coloring with the minimum number of colors is computationally hard.

\section{Colorings with restricted number of colors}
\label{sec:NPC}

\begin{theorem} \label{thm:NPC}
For each $k\ge 2$, determining whether a given digraph admits a completely reachable coloring with $k$ colors is NP-complete.
\end{theorem}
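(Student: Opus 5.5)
The plan has two parts: membership in NP and NP-hardness.

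\emph{Membership in NP.} A certificate is a road coloring $\alpha$ of $G$ with label set $\Pow(\Sigma)$, $|\Sigma|=k$. Its size is $O(k|E|)$, which is polynomial because $k$ is fixed. Given $\alpha$, we build $\MCA(G,\alpha)$ and test complete reachability in polynomial time using the algorithm of Ferens and Szyku\l{}a~\cite{FS23,FS26}. This is the only nontrivial ingredient of this direction.

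\emph{Hardness.} I would reduce from the Hamiltonian cycle problem, restricted to digraphs in which it remains NP-complete (for instance, bounded in/out-degree digraphs). The reduction rests on a rigidity principle: if $G$ has few edges relative to $k|V|$, then in any $k$-coloring almost all letters are forced to act as permutations, so their edges form vertex-disjoint cycle covers. Concretely, take a digraph $H$ on $n$ vertices and build $G$ from $H$ as follows.
\begin{itemize}
\item Add a loop at every vertex.
\item Add a small gadget at one distinguished edge $u\to w$ of a candidate cycle, providing an extra edge so that some letter can merge two states, for example by letting a vertex $x$ send both a loop-type edge and an $H$-edge to the same target.
\end{itemize}
The gadget should make the required structure look like the \v{C}ern\'y automaton $\mC_n$: a letter $b$ acting as a cyclic permutation, and a letter $a$ acting as the identity except for one defect-one merge along the cycle.
\begin{itemize}
\item \emph{Forward direction.} A Hamiltonian cycle of $H$ gives $b$. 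The loops together with the gadget give $a$. Complete reachability then follows as for $\mC_n$, which is completely reachable.
\item \emph{Reverse direction.} Take a completely reachable $k$-coloring of $G$ and argue in order:
\begin{enumerate}
\item Every $(n-1)$-subset is reachable, so some letter has defect exactly $1$, and every $(n-1)$-subset is the image of an $(n-1)$- or $n$-subset under a single letter.
\item Counting in-degrees (the absorbing condition of Theorem~\ref{thm:CRC} applied to nearly tight sets) shows that every letter other than the gadget letter is a permutation. Each such permutation uses either the loop or an $H$-edge at each vertex.
\item Complete reachability forces the group generated by these permutations to act transitively on the states, and together with the single defect-one merge to reach every subset. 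With the degree bounds, this should force some letter's non-loop edges to form a single $n$-cycle in $H$, that is, a Hamiltonian cycle.
\end{enumerate}
\end{itemize}

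For general $k\ge2$, I would pad each vertex with additional parallel edges or out-degree so that exactly $k$ colors are needed and the extra $k-2$ letters are forced to act trivially, for example as the identity via loops. This keeps the counting argument intact. Parallel edges alone do not constrain a coloring (Sect.~\ref{subsec:simple}), so the padding must instead fix out-degrees at exactly $k$ with distinct targets.

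I expect the main obstacle to be the reverse direction, step (3): showing that any completely reachable $k$-coloring, not just one of \v{C}ern\'y shape, yields a Hamiltonian cycle. A union of several disjoint cycles together with a merge could still generate enough subsets unless the gadget and degree constraints rule this out. I would first try to make the transitivity argument rigorous. If the permutation letters, restricted to non-loop edges, split $V$ into several cycles, I would exhibit a nonempty subset whose preimages under every letter are too small, contradicting the argument from the necessity part of Theorem~\ref{thm:CRC}. The gadget would be tuned, for instance by making the merge occur only at one vertex, until this argument goes through.
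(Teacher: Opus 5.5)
Your NP-membership argument is the same as the paper's. The hardness half, however, is not yet a reduction, and two of its design choices would have to be replaced, not just filled in. First, a road coloring must put a nonempty label on \emph{every} edge, with the labels at each source partitioning $\Sigma$. In your forward direction $b$ is the Hamiltonian cycle and $a$ is ``loops plus one merge'', so for $k=2$ no letter is left for the edges of $H$ off the cycle. In fact, a loop plus two $H$-edges gives out-degree $3$, so for $k=2$ your $G$ has no road coloring at all, whatever $H$ is. Also, the gadget sits on an edge ``of a candidate cycle'', which the reduction cannot know in advance. The paper instead arranges that the leftover edges \emph{are} the merging letter. After deleting the padding, $G'$ is $2$-out-regular with in-degree $1$ at $x$, $3$ at $y$ and $2$ elsewhere; producing these in-degrees is the role of the funnel $F_3$. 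Hence the complement of a Hamiltonian cycle is automatically a letter of defect $1$, up to a small adjustment when the cycle uses a padding edge. One cannot control where that letter merges relative to the cycle, so the automaton is not of \v{C}ern\'y shape. This is why the paper makes $|V|$ prime (via $L_m$) and invokes \cite[Theorem~1]{BV16}: on a prime number of states, a cyclic permutation plus \emph{any} defect-$1$ letter gives complete reachability.

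Second, your reverse direction has no rigidity mechanism, and loops at every vertex destroy the natural one. They add an in-edge at every vertex, so the in-degree deficits that Lemma~\ref{lem:defect} turns into forced defect essentially vanish (for $k=3$ and $2$-in-regular $H$ they are almost all $0$). Several letters may then be permutations, namely cycle covers mixing loops and $H$-edges, and transitivity of the group they generate does not make any single one an $n$-cycle. Padding with loops does not force the extra letters to act as the identity either, because at each vertex the $k$ letters are matched to the $k$ out-edges independently of the other vertices. Your remark that parallel edges do not constrain colorings is false once $k$ is fixed, since Sect.~\ref{subsec:simple} enlarges the alphabet. The paper exploits exactly this constraint. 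Adding $k-2$ parallel edges from every vertex to the single vertex $x$ makes $G$ $k$-out-regular, so every label is a singleton. Lemma~\ref{lem:defect}, with deficit $k-2$ at each of the $n-2$ vertices of in-degree $2$ and at least $k-3$ at $y$, then yields $\sum_{b}\defect(b)\ge(n-1)(k-2)-1$. Together with the easy facts that some letter has defect $0$ and some has defect $1$, this leaves room for exactly one letter of each kind. Then \cite[Proposition~3]{Hoffmann21} applies: if all $(n-1)$-subsets are reachable and there are $m<n$ letters of defect $1$, the group generated by the defect-$0$ letters has at most $m$ orbits. With $m=1$, the unique permutation letter is an $n$-cycle, i.e., a Hamiltonian cycle. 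This is exactly what your step~(3) is missing.
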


That the problem belongs to NP is easy to see: if one nondeterministically guesses a completely reachable coloring with $k$ colors, Ferens--Szyku\l{}a's algorithm~\cite{FS26} can verify the guess in polynomial time.

For the hardness proof, we use a variant of the classical \textsc{Hamiltonian cycle} problem. We say that a cycle $C$ in a digraph $G = (V,E,s,t)$ is \emph{Hamiltonian} if the length of $C$ is equal to $|V|$ and every vertex of $V$ occurs in $C$. In \cite{Ple79}, Ples\'nik constructed a polynomial-time reduction from the Boolean satisfiability problem \textsc{SAT} to the problem of determining whether a digraph from a special class has a Hamiltonian cycle. We will show that every digraph $G$ from Ples\'nik's class can be transformed in polynomial time into a digraph $\widetilde{G}$ such that $G$ has a Hamiltonian cycle if and only if $\widetilde{G}$ admits a completely reachable coloring with $k\emph{}$ colors. Since \textsc{SAT} is NP-complete, this establishes the NP-hardness of our problem.

We need some notions and notation. For a vertex $v$ of a digraph $G=(V,E,s,t)$, its \emph{out-degree} $\dd_G^+(v)$ is the number of edges $e\in E$ such that $s(e)=v$, and its \emph{in-degree} $\dd_G^-(v)$ is the number of edges $e\in E$ such that $t(e)=v$. A digraph $G=(V,E)$ is $k$-\emph{out-regular} if $\dd_G^+(v)=k$ for all $v\in V$. For a DFA $\MCA = (Q,\Sigma)$, the \emph{defect} of a letter $a \in \Sigma$ is the number $\defect(a):=|Q| - |Q{\cdot}a|$.

\begin{lemma} \label{lem:defect}
Let $G =(V,E,s,t)$ be a $k$-out-regular digraph and $\alpha$ a road coloring of $G$ with color set $\Sigma$ of size $k$. Then in the DFA $\MCA(G, \alpha)$,
\[
   \sum_{a \in \Sigma} \defect(a)\ge   \sum_{v \in V} \max\{0, (k - \dd_G^-(v))\}.
\]
\end{lemma}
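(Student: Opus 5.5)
The plan is a double-counting argument. I would express each defect through preimage sizes and then compare the number of edges entering each vertex with the number of letters.

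First, rewrite the defect of a single letter. For $a\in\Sigma$ and $v\in V$, let $m_a(v):=|\{u\in V: u{\cdot}a=v\}|$. Then $\sum_{v}m_a(v)=|V|$, and $|V{\cdot}a|$ counts the vertices $v$ with $m_a(v)\ge1$. Hence
\[
\defect(a)=|V|-|V{\cdot}a|=\sum_{v\in V}\bigl(m_a(v)-[m_a(v)\ge1]\bigr)=\sum_{v\in V}\max\{0,m_a(v)-1\}.
\]
Summing over $a\in\Sigma$ and exchanging the order of summation, it suffices to prove, for each vertex $v$,
\[
\sum_{a\in\Sigma}\max\{0,m_a(v)-1\}\ \ge\ \max\{0,k-\dd_G^-(v)\}.
\]

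Next, relate $m_a(v)$ to the edges entering $v$. Since $G$ is $k$-out-regular and $|\Sigma|=k$, the labels of the $k$ edges leaving any vertex form a partition of $\Sigma$ into $k$ nonempty blocks. So every edge carries exactly one letter, and $\alpha$ is a bijection from the out-edges of each vertex onto $\Sigma$. Consequently $\sum_a m_a(v)$ equals the number of edges with target $v$, namely $\dd_G^-(v)$; parallel edges are counted correctly, because they carry distinct letters.

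Fix $v$. Write $d:=\dd_G^-(v)$ and $r:=|\{a: m_a(v)\ge1\}|$. Then $r\le d$ and
\[
\sum_a\max\{0,m_a(v)-1\}=d-r.
\]
As stated, the inequality wants $d-r\ge k-d$, and this does not follow from the single-vertex count. The per-vertex quantity is naturally bounded below by the \emph{excess} $\max\{0,d-k\}$, since $r\le k$. The deficit $\max\{0,k-d\}$ is instead the number of letters $a$ with $v\notin V{\cdot}a$.

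The main obstacle is therefore a global balancing step. For each letter $a$,
\[
\defect(a)=|\{v: m_a(v)=0\}|.
\]
Summing over $a$ gives
\[
\sum_a\defect(a)=\sum_v\bigl(k-r_v\bigr),
\]
where $r_v$ denotes the quantity $r$ computed at the vertex $v$. Since $r_v\le\dd_G^-(v)$, we get $k-r_v\ge k-\dd_G^-(v)$. Since also $r_v\le k$, we get $k-r_v\ge 0$. Together these give $k-r_v\ge\max\{0,k-\dd_G^-(v)\}$.

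Summing this over $v$ proves the lemma directly. The first decomposition via excesses is thus unnecessary. The cleaner route is to count, for each vertex, the letters that miss it: at least $k-\dd_G^-(v)$ letters miss $v$, because at most $\dd_G^-(v)$ distinct letters can label edges entering $v$.
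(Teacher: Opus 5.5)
Your final argument is correct and is essentially the paper's proof. You count, for each vertex $v$, the letters whose image misses $v$, and you bound the letters that hit $v$ by $\dd_G^-(v)$ because every edge carries a single letter; the paper does the same via the subdigraphs $G_a$ and the identity $\defect(a)=|\{v:\dd_{G_a}^-(v)=0\}|$. In a final write-up, drop the opening excess-based reduction: the per-vertex inequality it reduces to is actually false in general, for instance at a vertex of in-degree~$0$, where its left side is $0$ and its right side is $k$.
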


\begin{proof}
Since $|\Sigma| = k$ and $G$ is $k$-out-regular, each set $\alpha(e)$ is a singleton, because for every vertex $v$, the sets $\{\alpha(e): s(e)=v\}$  partition  $\Sigma$.

For each $a \in \Sigma$, let $G_a$ be the subdigraph of $G$ whose vertex set is $V$ and whose edge set consists of all edges $e$ with $\alpha(e)=\{a\}$. For every vertex $v$, we have
\[
 \left|  \{a \in \Sigma : \dd_{G_a}^-(v) = 0\} \right|+ \left|  \{a \in \Sigma : \dd_{G_a}^-(v) > 0\} \right| = |\Sigma|=k,
\]
and the summand $\left|  \{a \in \Sigma : \dd_{G_a}^-(v) > 0\} \right|$ does not exceed $\dd_G^-(v)$. Hence
\[
    \left|  \{a \in \Sigma : \dd_{G_a}^-(v) = 0\} \right| \ge \max\{0, (k - \dd_G^-(v))\},
\]
since the left-hand side is nonnegative. Summing up over all $v\in V$ yields
\[
  \sum_{v \in V}  \left|  \{a \in \Sigma : \dd_{G_a}^-(v) = 0\} \right| \ge \sum_{v \in V}\max\{0, (k - \dd_G^-(v))\},
\]
and the sum in the left-hand side is equal to $\sum_{a \in \Sigma}\defect (a)$ since, clearly, $\defect(a)$ equals the number of vertices with in-degree $0$ in $G_a$.
\end{proof}

We now identify a class of digraphs for which Hamiltonicity is equivalent to the existence of a completely reachable coloring with a prescribed number of colors.

\begin{lemma} \label{lem:Hami}
Let $k \ge 2$ and let $G = (V,E,s,t)$ be a $k$-out-regular digraph with a prime number of vertices greater than $\max\{3,k\}$. Suppose that there exists a vertex $x\in V$ such that, for every $v\in V$, there are at least $k-2$ edges from $v$ to $x$, and that deleting $k-2$ edges from $v$ to $x$ from $E$ for each $v\in V$ yields a digraph $G'$ in which
      \begin{itemize}
        \item $\dd_{G'}^-(x) = 1$,
        \item there exists a vertex $y$ such that $\dd_{G'}^-(y) = 3$,
        \item for each $z \in V \setminus \{x,y\}$, $\dd_{G'}^-(z) = 2$.
      \end{itemize}
Then $G$ admits a completely reachable coloring with $k$ colors if and only if $G$ has a Hamiltonian cycle.
\end{lemma}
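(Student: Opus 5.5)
The plan is to prove both directions by looking at how the $k$ letters can distribute the in-edges of $G$, using the fact that a completely reachable automaton must contain a letter of defect exactly $1$. Indeed, take a shortest word $\sigma=\sigma'a$ with $|V{\cdot}\sigma|=n-1$, where $n:=|V|$. Every prefix of $\sigma'$ then has image $V$, so $V{\cdot}a$ has $n-1$ elements. Since $|\Sigma|=k$ and $G$ is $k$-out-regular, every label is a singleton (as in the proof of Lemma~\ref{lem:defect}). Each letter is therefore a choice of one out-edge per vertex, distinct letters use distinct edges at each vertex, and each $z\in V\setminus\{x,y\}$ is hit by at most $2$ letters, $y$ by at most $3$.

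\emph{Only if.} Let $\MCA(G,\alpha)$ be completely reachable. First I will show that at most two letters have defect $\le 1$. Such a letter hits all but at most one of the $n-2$ vertices $z\ne x,y$, and each such $z$ has only two in-edges. Three such letters would then need at least $3(n-3)>2(n-2)$ of these edges, which is impossible for $n>3$.

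Any word taking an $(n-1)$-set to an $(n-1)$-set consists of letters of defect $\le1$. Let $a$ be a letter of defect $1$; then $a$ misses one state $e$ and identifies exactly one pair $p,q$. Counting image sizes via the in-degree bounds, as above, should show that the other letter of defect $\le1$, call it $b$, exists and is a permutation. The idea is that if $b$ also had defect $1$, the only reachable $(n-1)$-sets would be $V\setminus\{e\}$ and $V\setminus\{e_b\}$ together with their images, far fewer than $n$.

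Next I describe the reachable $(n-1)$-sets. They are exactly the sets $V\setminus\{z\}$ for $z$ in the closure of $\{e\}$ under two operations: applying $b$ sends $z\mapsto z{\cdot}b$, and applying $a$ to $V\setminus\{z\}$ keeps size $n-1$ only when $z\in\{p,q\}$, in which case the result is $V\setminus\{e\}$ again. Hence the reachable missing states form the $\langle b\rangle$-orbit of $e$. Since all $n$ sets $V\setminus\{z\}$ must be reachable, $b$ is a single $n$-cycle. Its edges form a closed walk through all $n$ vertices using $n$ edges, i.e.\ a Hamiltonian cycle of $G$.

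\emph{If.} Let $H$ be a Hamiltonian cycle. Let $b$ follow $H$: at each vertex, $b$ takes the edge of $H$. Then $H$ enters $x$ exactly once, either along the unique $G'$-edge into $x$ or along one of the $k-2$ parallel edges into $x$. At every vertex, assign the remaining $G'$-edge (or, at the predecessor of $x$ if $H$ used a parallel edge, an appropriate remaining edge) to a letter $a$. Distribute the leftover edges, which all go to $x$ apart from this bookkeeping, among the other $k-2$ letters. I then check by the in-degree count that $a$ has defect exactly $1$: the vertex $y$, with three $G'$ in-edges, receives one $b$-edge and two $a$-edges, and the vertex not covered by $a$ is then forced.

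To prove complete reachability it suffices to handle the subautomaton on $\{a,b\}$, since extra letters never hurt. Here $b$ is a cyclic permutation of prime order $n$, and $a$ has defect $1$ with missing state $e$ and glued pair $\{p,q\}$. I would invoke the standard Bondar--Volkov style argument: from a reachable set $S$ of size $m$, apply a power of $b$ so that $S$ contains both $p,q$ up to a shift, then apply $a$, which decreases the size by one; translates by $b$ reach every set of a given size provided the ``shift'' differences generate $\mathbb{Z}_n$. Primality of $n$ is exactly what makes every nonzero difference $p-q$ a generator, so every nonempty subset is reachable.

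I expect the main obstacle to be this last step: making the descent from size $m$ to size $m-1$ hit every $(m-1)$-set, not just some. The natural route is to show that the reachable $m$-sets form a union of $\langle b\rangle$-orbits that is closed under the move ``replace $e$ by an element of $\{p,q\}$'', and then use primality to conclude that this closure is everything.
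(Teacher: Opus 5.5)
Your necessity argument has the right shape, and its second half takes a different route from the paper. Instead of citing Hoffmann's Proposition~3, you show directly that, with one permutation $b$ and one letter $a$ of defect~$1$, the reachable $(n-1)$-sets are exactly the sets $V\setminus\{z\}$ with $z$ in the $\langle b\rangle$-orbit of $e$. That is correct and self-contained. The problem is the first counting step: $3(n-3)>2(n-2)$ means $n>5$, not $n>3$, and $n=5$ is allowed whenever $k\le 4$. At $n=5$ the count gives only $6\le 6$, and the claim genuinely needs more input. For $n=5$, $k=3$ there are colorings of digraphs satisfying the hypotheses with three letters of defect~$1$, each missing a different $z$. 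Your proof that $b$ exists and is a permutation rests on this claim, so reverse the order as the paper does. First obtain a letter of defect~$0$ by pigeonhole: without one, every reachable $(n-1)$-set equals $V{\cdot}c$ for its last letter $c$, so at most $k<n$ of them are reachable. Only then count: a permutation plus two more letters of defect $\le1$ would need at least $(n-2)+2(n-3)>2(n-2)$ in-edges at the vertices $z\ne x,y$, which is impossible for $n\ge5$.

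In the sufficiency part, the ``appropriate remaining edge'' at the predecessor $w$ of $x$ on $H$ need not exist. Suppose $H$ enters $x$ by a deleted edge and both $G'$-edges out of $w$ go to $y$. This is compatible with the hypotheses, e.g.\ $n=5$, $k=3$, $G'$ with edges $x\to y,z_1$; $y\to z_1,z_2$; $z_1\to z_2,z_3$; $z_2\to z_3,x$; and $z_3\to y$ twice; $G$ obtained by adding one edge $v\to x$ at every $v$; and $H\colon z_3\to x\to y\to z_1\to z_2\to z_3$. Then both non-$H$ $G'$-edges into $y$ start at $w$, so your $a$ hits $x$, $y$ and every $z$ exactly once. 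It is a permutation, $y$ gets one $a$-edge rather than two, and for $k=3$ there is no other edge at $w$ to choose.

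The repair is to reroute elsewhere. Keep $w{\cdot}a=y$, pick $v_0\ne w$ whose non-$H$ $G'$-edge goes to some $z\ne x,y$, and let $a$ use one of the $k-2\ge1$ deleted edges $v_0\to x$ instead; it is unused by $b$, as $v_0\ne w$. Now $V{\cdot}a=V\setminus\{z\}$. The paper's own step here, removing from $D$ the edge into an arbitrary $z$, tacitly needs that edge to leave $w$ and breaks in the same configuration, so some such repair is needed either way.

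Finally, your sketch of the prime-order step is not yet a proof and tracks the wrong difference ($p-q$ instead of $e-d$). Either cite the Bondar--Volkov theorem as the paper does, or argue backwards. Identify $V$ with $\MBZ_n$ so that $b$ is $i\mapsto i\oplus 1$, and put $d:=p{\cdot}a$. For a proper nonempty $P$, choose $j$ with $d\oplus j\in P$ and $e\oplus j\notin P$; it exists because $e-d\ne 0$ generates $\MBZ_n$ for prime $n$. Then $P':=\{i : i\oplus j\in P\}$ contains $d$ and omits $e$, so its full $a$-preimage $R$ has $|P|+1$ elements and $P=R{\cdot}a\,b^{j}$. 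Induction on $n-|P|$ completes the argument.
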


\begin{proof} \emph{Necessity}. Let $\alpha$ be a completely reachable coloring of $G$ with color set $\Sigma$ of size $k$, and let $n=|V|$. Every $(n-1)$-element subset of $V$ must be reachable in the DFA $\MCA(G,\alpha)=(V,\Sigma)$. This implies that $\Sigma$ contains letters of defect $1$. If $P=V{\cdot}\sigma$ for some word $\sigma=\sigma'c$ with $\sigma'\in\Sigma^*$ and $c\in\Sigma$, then $P\subseteq V{\cdot}c$. Therefore, if $\Sigma$ contained no letters of defect $0$, then words reaching distinct $(n-1)$-element subsets of $V$ would have distinct last letters. Since there are $n$ such subsets and only $k<n$ letters, this is impossible. Hence, $\Sigma$ has a letter of defect $0$. Denote it by $a$. We claim that no further letter has defect 0 and exactly one letter has defect $1$.

Indeed, by Lemma \ref{lem:defect}, we have
    \[
      \sum_{b \in \Sigma}\defect(b) \ge \sum_{v \in V} \max(0, k - \dd_G^-(v)) \ge \sum_{v \in V\setminus\{x\}} \max(0, k - \dd_G^-(v)).
    \]
Conditions imposed on the digraph $G$ imply that $\dd_{G}^-(y) = \dd_{G'}^-(y) = 3$ and $\dd_{G}^-(z) = \dd_{G'}^-(z) = 2$  for each $z \in V \setminus \{x,y\}$. Therefore the sum on the right-hand side equals $(n-2)(k-2)+k-3=(n-1)(k-2)-1$. Now assume that there exist two letters $c,d\ne a$ of defect 0 or 1. Since the defect of any letter does not exceed $n-1$,
\begin{multline*}
     \sum_{b \in \Sigma}\defect(b) = \sum_{b\ne a,c,d}\defect(b) + \defect(a) + \defect (c) + \defect (d)  \le \\
   (n-1)(k-3) + 0 + 1 + 1 = (n-1)(k-2) - (n-3).
\end{multline*}
 Since $n\ge 5$, we have $-1>-(n-3)$, and hence the inequalities
 \[
 \sum_{b \in \Sigma}\defect(b) \ge (n-1)(k-2)-1\  \text{ and }\ \sum_{b \in \Sigma}\defect(b) \le (n-1)(k-2) - (n-3)
 \]
 contradict each other.

Proposition~3 of \cite{Hoffmann21} states that if an $n$-state DFA $(Q,\Theta)$ has $m<n$ letters of defect~1 and every $(n-1)$-element subset of $Q$ is reachable, then the group of transformations on $Q$ generated by the letters of defect~0 has at most $m$ orbits. Applied to the DFA $\MCA(G,\alpha)$, where, as we have shown, the group is generated by $a$ and $m=1$, this implies that $a$ is a cyclic permutation of the vertices in $V$. Hence, the edges labeled $a$ form a Hamiltonian cycle.

\smallskip

\emph{Sufficiency}.  First observe that since the digraph $G$ is $k$-out-regular and, when passing to $G'$, we remove $k-2$ edges from $v$ to $x$ for each $v\in V$, the digraph $G'$ is $2$-out-regular.

Now let $C$ be a Hamiltonian cycle in the digraph $G$ and let $D$ be the set of edges of the subdigraph  $G'$ that do not belong to $C$. The vertex $x$ is the target of exactly one edge $e_x$ of $C$. If this edge is not among those removed when passing to $G'$, then every edge of $C$ is in $G'$. We label the edges of $C$ and $D$ by the letters $c$ and $d$, respectively. This road colors the digraph $G'$ into the DFA $(V,\{c,d\})$, in which $c$ acts as a cyclic permutation and $d$ has defect 1, since $V\setminus V{\cdot}d=\{x\}$. It follows from \cite[Theorem~1]{BV16} that a DFA with a prime number of states is completely reachable whenever it has a letter that acts as a cyclic permutation and a letter of defect~1. Therefore, the DFA $(V,\{c,d\})$ is completely reachable. Clearly, if we extend this coloring of $G'$ to a  coloring $\alpha$ of the digraph $G$, then the DFA $\MCA(G,\alpha)$ is also completely reachable.

Assume that the edge $e_x$ was removed when passing to $G'$. The other $n-1$ edges of $C$ remain in $G'$, whence $|D|=n+1$. Fix a vertex $z\ne x,y$. Then $\dd_{G'}^-(z) = 2$, that is, there are two edges in $G'$ with target $z$, one in $C$ and one in $D$. Let $e$ be the edge in $D$ with $t(e)=z$ and let $G''$ be the digraph with vertex set $V$ and edge set $C\cup(D\setminus\{e\})$.  We label the edges of $C$ and $D\setminus\{e\}$ by the letters $c$ and $d$, respectively. This road colors the digraph $G''$ into the DFA $(V,\{c,d\})$, in which $c$ acts as a cyclic permutation and $d$ has defect 1, since $V\setminus V{\cdot}d=\{z\}$. As above, \cite[Theorem~1]{BV16} implies that the DFA $(V,\{c,d\})$ is completely reachable, and this coloring of $G''$ extends to a completely reachable coloring of the digraph $G$.
\end{proof}

\begin{proof}[Proof of Theorem \ref{thm:NPC}]
Ples\'nik~\cite{Ple79} constructed, for an arbitrary Boolean formula $F$ in conjunctive normal form, a digraph $G =(V,E,s,t)$ such that
\begin{itemize}
\item[(i)] $|V|$ and $|E|$ are polynomial in the numbers of variables and clauses of $F$,
\item[(ii)] for every vertex $v\in V$, either $\dd_G^-(v)=1$ and $\dd_G^+(v)=2$, or $\dd_G^-(v)=2$ and $\dd_G^+(v)=1$,
\item[(iii)] every edge joins a vertex of out-degree $1$ to a vertex of in-degree $1$, or a vertex of out-degree $2$ to a vertex of in-degree $2$, and
\item[(iv)] $F$ is satisfiable if and only if $G$ has a Hamiltonian cycle.
\end{itemize}
The digraph $G$ in Ples\'nik's construction is also planar, but planarity is not needed for the argument below. Adding a parallel edge to each edge that joins a vertex of out-degree $1$ to a vertex of in-degree $1$ does not affect the presence or absence of a Hamiltonian cycle. Consequently, one may replace digraphs satisfying conditions (i)--(iv) by digraphs satisfying (i), (iv), and
\begin{itemize}
\item[(v)]  $\dd_G^-(v)=\dd_G^+(v)=2$ for every vertex $v\in V$.
\end{itemize}

In view of the NP-completeness of the Boolean satisfiability problem, it suffices to show that, for each $k\ge 2$, any digraph $G$ satisfying (v) can be transformed, in polynomial time, into a digraph to which Lemma~\ref{lem:Hami} applies.

We use the following operation. If $D = (V_D, E_D, s_D, t_D)$ and $H = (V_H, E_H, s_H, t_H)$ are digraphs with disjoint vertex and edge sets, $v$ is a vertex of $D$, and $x_1,x_2$ are distinct vertices in $H$, let  $\Replace(D,\frac{v}{x_1,x_2},H)$ denote the digraph with the vertex set $(V_D\setminus \{v\}) \cup V_H$, the edge set $E_D \cup E_H$, and the functions $s$ and $t$ defined by
\begin{align*}
  s(e)&: = \begin{cases}
    s_D(e) & \text{if $e \in E_D$ and $s_D(e) \neq v$},\\
    s_H(e) & \text{if $e \in E_H$},\\
    x_1 & \text{otherwise},
  \end{cases}\\
  t(e)&: = \begin{cases}
    t_D(e) & \text{if $e \in E_D$ and $t_D(e) \neq v$},\\
    t_H(e) & \text{if $e \in E_H$},\\
    x_2 & \text{otherwise.}
  \end{cases}
\end{align*}

A digraph to which Lemma~\ref{lem:Hami} applies must contain a subdigraph $G'$ with two vertices $x$ and $y$ having in-degrees $1$ and $3$, respectively, while every other vertex $z$ satisfies $\dd_{G'}^-(z)=2$. To achieve this, we use the \emph{funnel} digraph $F_3$ with three vertices $x,y,x_1$ and four edges: two from $x$, one to $y$ and one to $x_1$, and two from $y$, one to $x$ and one to $x_1$. Take an arbitrary digraph $G$ satisfying (v), choose any vertex $v_0$ of $G$, and define $G_1:= \Replace(G, \frac{v_0}{x_1,y},F_3)$. The transformation $G\rightsquigarrow G_1$ is illustrated in Fig.~\ref{fig:inserting F3}.

\medskip

\begin{figure}[ht]
\centering
 \begin{tikzpicture}[xscale=-1,
  >=stealth,
  auto,
  state/.style={circle,draw,minimum size=8mm,inner sep=0pt}
]

\node[state] (x)  at (-2,2) {$x$};
\node[state] (y)  at ( 2,2) {$y$};
\node[state] (x1) at ( 0, -1) {$x_1$};

\path[->,thick] (x) edge[bend left=20]  (y);
\path[->,thick] (y) edge[bend left=20] (x);

\path[->,thick] (x)  edge[bend left=10] (x1);
\path[->,thick] (y) edge[bend right=10] (x1);

\node (pass) at (5,0.5)  {\Huge $\rightsquigarrow$};

\node[state] (v) at (7,0.5) {$v_0$};

\node (f1) at (8,-1) {};
\node (f2) at (8,-0.5) {};
\node (f3) at (8,1.5) {};
\node (f4) at (8,2) {};
\path[->,thick] (v)  edge[bend right=10] (f1);
\path[->,thick] (v)  edge[bend left=10]  (f2);
\path[->,thick] (f3) edge[bend left=10]  (v);
\path[->,thick] (f4) edge[bend right=10] (v);

\node (g1) at (2,-1) {};
\node (g2) at (2,-0.5) {};
\node (g3) at (4,1.5) {};
\node (g4) at (4,2) {};
\path[->,thick] (x1) edge[bend right=10] (g1);
\path[->,thick] (x1) edge[bend left=10] (g2);
\path[->,thick] (g3) edge[bend left=10] (y);
\path[->,thick] (g4) edge[bend right=10] (y);
\end{tikzpicture}
\caption{The transformation $G\rightsquigarrow\Replace(G, \frac{v_0}{x_1,y},F_3)$}\label{fig:inserting F3}
\end{figure}

Clearly, $\dd_{G_1}^-(x) = 1$, $\dd_{G_1}^-(y) = 3$, and $\dd_{G_1}^-(z) = 2$ for each $z \ne x,y$. It is easy to see that $G_1$ is 2-out-regular and has a Hamiltonian cycle if and only if so does $G$.

The next property we need to ensure to be able to apply Lemma \ref{lem:Hami} is the primality of the number of states. For this, for each positive integer $m$, consider the \emph{double line} digraph $L_m$ with vertex set $\{y_1, \dots, y_{m+1}\}$ that has two edges from $y_i$ to $y_{i+1}$ for all $1 \le i \le m$. Choose $m$ as the least positive integer such that $m + |V_{G_1}|$ is a prime number and let $G_2 := \Replace(G_1,\frac{x_1}{y_1,y_{m+1}},L_m)$. The transformation $G_1\rightsquigarrow G_2$ is illustrated in Fig.~\ref{fig:inserting Lm}.

\medskip

\begin{figure}[ht]
\centering
\begin{tikzpicture}[
  >=stealth,
  auto,
  state/.style={circle,draw,minimum size=9mm,inner sep=0pt}
]

\foreach \i in {1,...,3}
  \node[state] (y\i) at (1.9*\i,0) {$y_{\i}$};

\foreach \i in {1,...,2}{
  \pgfmathtruncatemacro{\j}{\i+1}
  \draw[->,thick,bend left=18]  (y\i) to (y\j);
  \draw[->,thick,bend right=18] (y\i) to (y\j);
}

\node (dots) at (6.6,0) {$\centerdot\centerdot\centerdot$};

\node[state] (ym) at (7.6,0) {$y_m$};
\node[state] (ym1) at (9.5,0) {$y_{m{+}1}$};
\draw[->,thick,bend left=18]  (ym) to (ym1);
\draw[->,thick,bend right=18] (ym) to (ym1);

\node[state] (v) at (-1,0) {$x_1$};

\node (f1) at (-1.5,-1.5) {};
\node (f2) at (-1.5,-1) {};
\node[state] (f3) at (-1.7,1.5) {$y$};
\node[state] (f4) at (-0.3,1.5) {$x$};
\path[->,thick] (v)  edge[bend left=10] (f1);
\path[->,thick] (v)  edge[bend right=10] (f2);
\path[->,thick] (f3) edge[bend right=10] (v);
\path[->,thick] (f4) edge[bend left=10]  (v);

\node (pass) at (0.5,0)  {\Huge $\rightsquigarrow$};

\node[state] (g3) at (1.2,1.5) {$y$};
\node[state] (g4) at (2.6,1.5) {$x$};
\path[->,thick] (g3) edge[bend right=10] (y1);
\path[->,thick] (g4) edge[bend left=10]  (y1);
\node (g1) at (9,-1.5) {};
\node (g2) at (9,-1) {};
\path[->,thick] (ym1)  edge[bend left=10] (g1);
\path[->,thick] (ym1)  edge[bend right=10] (g2);
\end{tikzpicture}
\caption{The transformation $G_1\rightsquigarrow\Replace(G_1, \frac{x_1}{y_1,y_{m+1}},L_m)$}\label{fig:inserting Lm}
\end{figure}

The digraph $G_2$ retains all properties of $G_1$---it is a 2-out-regular digraph such that $\dd_{G_2}^-(x) = 1$, $\dd_{G_2}^-(y) = 3$, $\dd_{G_2}^-(z) = 2$ for each $z \in V_{G_2}\setminus \{x,y\}$, and it has a Hamiltonian cycle if and only if so does $G_1$.  Besides, the number of states of $G_2$ is prime.

Define $G_3$ to be the digraph obtained from $G_2$ by adding $k-2$ edges from $v$ to $x$ for every vertex $v\in V_{G_2}$. Then $G_3$ has a Hamiltonian cycle if and only if $G_2$ has a Hamiltonian cycle and $G_3$ fulfills all conditions of Lemma \ref{lem:Hami} (with the subdigraph $G_2$ in the role of $G'$). Lemma \ref{lem:Hami} now implies that the digraph $G_3$ is completely reachable if and only if the initial digraph $G$ satisfying (v) has a Hamiltonian cycle. The described transformation $G\rightsquigarrow G_3$ requires time polynomial in the size of $G$ since the least prime greater than a given positive integer $n$ does not exceed $2n$ by Bertrand's postulate, and the sieve of Eratosthenes can be used to find this prime in $O(n\log\log n)$ time.
\end{proof}

\section{When all colorings are completely reachable} \label{sec:ARC}
For a vertex $u$ of a digraph $G$, an \emph{out-neighbor} of $u$ is any vertex $v$ such that $G$ has an edge from $u$ to $v$.

\begin{lemma} \label{lem:uniquesplit}
A digraph $G=(V,E,s,t)$ all of whose colorings are completely reachable has a unique vertex with more than one out-neighbor and contains a Hamiltonian cycle.
\end{lemma}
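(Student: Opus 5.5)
The plan is to exploit the freedom in choosing colorings: since every coloring of $G$ must be completely reachable, it suffices to exhibit, whenever the conclusion fails, one specially designed coloring that is not completely reachable. By Sect.~\ref{subsec:simple} we may assume $G$ is simple. By Theorem~\ref{thm:CRC}, $G$ is strongly connected and aperiodic, with $n:=|V|$. A coloring with color set $\Sigma$ is the same thing as a family of maps $g_a\colon V\to V$, $a\in\Sigma$, with $g_a(v)$ an out-neighbor of $v$ and every edge realized by some $g_a$.

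\emph{Key counting observation.} If $V{\cdot}\sigma$ has $n-1$ elements and $\sigma=\sigma' b$, then $V{\cdot}\sigma\subseteq V{\cdot}b$. Hence the $(n-1)$-element reachable subsets are of the form $V{\cdot}\pi b$, where $\pi$ is a product of defect-$0$ letters and $b$ has defect $0$ or $1$. Consequently, if a coloring has no permutation letter, it reaches at most $|\Sigma|$ subsets of size $n-1$. I will aim for colorings in which each vertex $v$ has a fixed ``default'' out-neighbor $f(v)$, and each letter agrees with $f$ except at one vertex. Concretely, for every vertex $v$ and every out-neighbor $w\ne f(v)$ of $v$ there is one letter $g_{v,w}$ equal to $f$ except $g_{v,w}(v)=w$; if $f$ itself must be used, one extra letter $f$ is added. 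Choosing $f$ non-injective, which should be possible as soon as some vertex has two out-neighbors, gives many letters of large defect and few permutation letters. I then want to arrange that the number of letters of defect at most $1$ is smaller than $n$, while no letter is a permutation, and obtain a contradiction.

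\emph{Hamiltonian cycle.} Once it is known that some coloring has few letters of defect $\le 1$, complete reachability forces a permutation letter $a$. I would then invoke Proposition~3 of \cite{Hoffmann21}, as in the proof of Lemma~\ref{lem:Hami}. I would design the coloring so that exactly one letter has defect $1$, for example by choosing $f$ with $|f(V)|=n-1$, and then the letters $g_{v,w}$ have defect $0$, $1$ or $2$. With $m=1$, that proposition gives that the group generated by the defect-$0$ letters is transitive. I would further want only one defect-$0$ letter, so that it is a cyclic permutation and its edges form a Hamiltonian cycle. Showing this reduction to a single permutation letter is where I would need care.

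\emph{Uniqueness of the branching vertex.} Suppose two distinct vertices $u,w$ each have at least two out-neighbors; there is at least one such vertex since $G$ is aperiodic and not a single cycle (the case $n=1$ is trivial). I would choose $f$ to be a bijection whenever possible, e.g.\ along a cycle through many vertices, and let letters deviate from $f$ simultaneously at $u$ and $w$ towards their other out-neighbors. Then every non-permutation letter has defect at least $2$, or all defect-$1$ letters miss the same vertex. In either case some $(n-1)$-subset, say $V\setminus\{z\}$ for a suitable $z$, is never reached: every defect-$1$ letter together with a permutation product omits only vertices in the orbit of a fixed vertex under the permutation letters, and I would pick the permutation letters to be non-transitive.

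The main obstacle is this last construction. I must simultaneously cover all edges, keep the defect-$1$ letters missing a controlled set of vertices, and keep the group of permutation letters intransitive. I would first try, for the two branching vertices $u\ne w$, to group all deviations at $u$ and at $w$ into the same letters, so that those letters merge two images and have defect $\ge 2$ except in degenerate cases. I would then handle the degenerate cases, where such a pairing yields a permutation, separately using strong connectivity.
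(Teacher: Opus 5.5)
Your proposal leaves both constructions open, and the route you sketch to the Hamiltonian cycle cannot work in general. Proposition~3 of Hoffmann gives transitivity only for a coloring with exactly one letter of defect~$1$, and such a coloring need not exist. Suppose $G$ is simple and $w$ is its only vertex with several out-neighbors. Then all letters of any coloring agree on $V\setminus\{w\}$. A compatible bijection always exists: it is the perfect matching of $BP(G)$ from the proof of Theorem~\ref{thm:CRC}. Hence the letters sending $w$ to one particular out-neighbor $u_0$ act as this bijection, and all other letters have defect exactly~$1$. Every edge leaving $w$ must carry a letter, so every coloring has at least $\dd_G^+(w)-1$ letters of defect~$1$. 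For example, all colorings of $W(\{1,2\},n)$ are completely reachable by the sufficiency part of Theorem~\ref{thm:ACR}. Yet it has no coloring with a single letter of defect~$1$, and Proposition~3 then only bounds the number of orbits by~$2$.

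The missing idea is the paper's direct argument. Once uniqueness is known, every simple cycle passes through $w$ and is determined by the out-neighbor $u$ of $w$ it uses. For a cycle that is maximal with respect to vertex sets, $w$ is the only in-neighbor of $u$. Two maximal cycles entered through $u_1\ne u_2$ would make $\{u_1,u_2\}$ unreachable in any coloring, because the last letter of a word reaching it would have to send $w$ to both $u_1$ and $u_2$. Hence the maximal cycle is unique, and it is Hamiltonian.

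For uniqueness, the dichotomy you aim at is not always attainable: either every non-permutation letter has defect $\ge2$, or all defect-$1$ letters miss one vertex while the permutation letters act intransitively. So the ``degenerate cases'' are the heart of the matter. Take $V=\{1,2,3\}$ with edges $1\to2$, $1\to3$, $2\to3$, $2\to1$, $3\to1$. The conditions of Theorem~\ref{thm:CRC} hold and both $1$ and $2$ branch. However:
\begin{itemize}
\item the only compatible bijection is the $3$-cycle $1\mapsto2\mapsto3\mapsto1$, and every other compatible map has defect~$1$;
\item the edge $1\to3$ is off the cycle, so every coloring contains a letter of defect~$1$;
\item any cover of the edges by non-permutations must use a map missing $2$ (for $2\to3$) and a map missing $3$ (for $1\to2$).
\end{itemize}
So neither alternative can occur. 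The same example breaks the paper's claim $\defect(M_\ell)\ge2$. For $\ell=(3',1'')$ one gets $z=y_2=2''$, and $M_\ell$ acts as $1\mapsto3$, $2\mapsto1$, $3\mapsto1$. The other link of $L\setminus M$ gives the same set, and the coloring with letters $M$ and $M_\ell$ is completely reachable.

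What does work here, and whenever the number $\beta$ of branching vertices satisfies $2\le\beta<n$, is your first construction with $f$ a compatible \emph{bijection} and $f$ itself omitted. Because $\beta\ge2$, the single-vertex deviations $g_{x,y}$ cover all edges. None of them is a permutation, and each has image $V\setminus\{f(x)\}$. So only $\beta$ subsets of size $n-1$ are reachable. This uses the corrected form of your counting observation: the reachable $(n-1)$-subsets are the sets $V{\cdot}b\pi$ with $\defect(b)=1$ and $\pi$ a product of permutation letters, not $V{\cdot}\pi b$.

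The case $\beta=n$ still needs a separate argument. For the complete loopless digraph on three vertices these deviations reach every subset, so one must use another coloring, e.g.\ the one whose only letters are the two $3$-cycles.
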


\begin{proof}
By Theorem~\ref{thm:CRC}, the digraph $G$ is strongly connected. Then every vertex has at least one out-neighbor. Assume that there exist two distinct vertices $u$ and $v$, each with at least two out-neighbors.

As explained in Sect.~\ref{subsec:simple}, we may assume that $G$ is simple. Consider the bipartite representation $BP(G)=(V',V'',L)$ of $G$. As shown in the proof of Theorem \ref{thm:CRC}, there is a matching $M\subset L$ that covers~$V'$. By the choice of $u$ and $v$, each of the vertices $u''$ and $v''$ is incident with at least two links of $L$. Hence there exist out-neighbors $u_1$ and $v_1$ of $u$ and $v$, respectively, such that the links $(u'_1,u'')$ and $(v'_1,v'')$ do not belong to $M$. (We do not claim that $u_1\ne v_1$; the argument below works equally well when $u_1=v_1$.) Using this observation, it is easy to see that for every link $\ell=(x_1,x_2)\in L\setminus M$, there exists a link $(y_1,y_2)\in L\setminus M$ such that $y_2\ne x_2$. Indeed, either $x_2\ne u''$ or $x_2\ne v''$; in the former case one may take $(u'_1,u'')$, and in the latter $(v'_1,v'')$. Now let $M_\ell$ denote the subset of $L$ obtained from $M$ by removing all links incident to $x_2$ and $y_2$ and adding the links $\ell=(x_1,x_2)$ and $(y_1,y_2)$.

Let $\mathcal{M} := \{M\} \cup \{M_\ell : \ell \in L\setminus M\}$. For every edge $e$ of the digraph $G$, write $\widehat{e}$ for the link $(t(e)', s(e)'')\in L$. Define a labeling $\alpha\colon E\to \mathcal{M}$ by
$\alpha(e): = \{N \in \mathcal{M} : \widehat{e} \in N\}$. To verify that $\alpha$ is a road coloring of $G$, consider, for an arbitrary vertex $w\in V$, the set $E_w$ of all edges with source $w$. For exactly one edge $e\in E_w$, one has $\widehat{e}\in M$; besides that, the link $\widehat{e}$ lies in each set $M_\ell$ such that $w''\ne x_2,y_2$, where $\ell=(x_1,x_2)$ and $(y_1,y_2)$ are the edges added in the construction of $M_\ell$. Every remaining set $M_\ell$ contains a link of the form $(z',w'')$, and hence belongs to the label of the unique edge in $E_w$ with target $z$. (The uniqueness follows from our assumption that the digraph $G$ is simple.) Thus, the family $\{\alpha(e): s(e)=w\}$ forms a partition of $\mathcal{M}$.

Let $\MCA: = \MCA(G,\alpha)$. Clearly, the letter $M$ acts as a permutation so its defect is 0. We claim that, for every link $\ell\in L\setminus M$, the defect of the letter $M_\ell$ is at least $2$. Indeed, let $\ell=(x_1,x_2)$, and let $(y_1,y_2)\notin M$ be the additional link used in the construction of $M_\ell$. Since the matching $M$ covers $V'$, there exist $z,t\in V''$ such that $(x_1,z),(y_1,t)\in M$. (The case $x_1=y_1$ is possible; in this case, of course, $z=t$.) Clearly, $z\ne x_2$ and $t\ne y_2$. Let $e_1,e_2,e_3,e_4$ be the preimages under the map $e\mapsto\widehat{e}$ of the links $(x_1,x_2),(y_1,y_2),(x_1,z)$, and $(y_1,t)$, respectively. Then $s(e_1){\cdot}M_\ell=s(e_3){\cdot}M_\ell$ and $s(e_2){\cdot}M_\ell=s(e_4){\cdot}M_\ell$ in $\MCA$. Therefore, $|V{\cdot}M_\ell|\le|V|-2$, that is, $\defect(M_\ell)\ge 2$.

Hence, the DFA $\MCA$ has no letter of defect $1$, and therefore, no $(|V|-1)$-element subset is reachable in $\MCA$. Thus, the digraph $G$ has a coloring that is not completely reachable, a contradiction.

Now let $w$ be the only vertex of $G$ with more than one out-neighbor. Since the digraph $G$ is strongly connected, $w$ belongs to every cycle of $G$. Call a cycle \emph{simple} if all its edges are distinct; in the rest of the proof the word `cycle' always means `simple cycle'. Every cycle is uniquely defined by an out-neighbor $u$ of $w$ as follows. If $u=w$, the corresponding cycle is simply the loop at $w$. If $u\ne w$, we start at $w$, first move to $u$, and then repeatedly move to the unique out-neighbor of the current vertex. The resulting path eventually returns to $w$, thus giving a cycle.

We order cycles by inclusion of their vertex sets. If a cycle $C$ is maximal with respect to this ordering, then the out-neighbor $u$ of $w$ defining $C$ must have $w$ as its only in-neighbor. Indeed, suppose that a vertex $v\ne w$ has an edge $e$ to $u$. Then $v$ is not a vertex in $C$ as otherwise the subpath of $C$ from $u$ to $v$, followed by $e$. would form a cycle not containing $w$, a contradiction. Since $G$ is strongly connected, there is a path from $w$ to $v$ in $G$; choose a shortest such path. This path, followed by $e$ and the subpath of $C$ from $v$ to $w$, forms a cycle in $G$ whose vertex set strictly contains that of $C$.

If a maximal cycle is unique, then it contains all vertices of $G$ and so is a Hamiltonian cycle. It remains to show that having more than one maximal cycle leads to a contradiction. Suppose that the cycles defined by two out-neighbors $u_1$ and $u_2$ of $w$ are both maximal. Consider the labeling $\beta$ of $G$ by $\dd_G^+(w)$ letters under which all edges with source $w$ get distinct labels and the label of each edge $w\to u$ is assigned to all edges of the cycle defined by $u$. Clearly, $\beta$ is a road coloring and so the DFA $\MCA(G,\beta)=(V,\Sigma)$ must be completely reachable. In particular, $\{u_1,u_2\}=V{\cdot}\sigma$ for some word $\sigma\in\Sigma^*$. Clearly, the word $\sigma$ is nonempty, and so $\sigma=\sigma'c$  for some $\sigma'\in\Sigma^*$ and $c\in\Sigma$. We have $(V{\cdot}\sigma'){\cdot}c=\{u_1,u_2\}$, which implies that the $V{\cdot}\sigma'$ contains in-neighbors of both $u_1$ and $u_2$. By the preceding paragraph, $w$ is the unique in-neighbor of each of $u_1$ and $u_2$. Thus, in $\MCA(G,\beta)$, we would have $w{\cdot}c=u_1$ and $w{\cdot}c=u_2$, which is impossible.
\end{proof}

Let $(\MBZ_n, \oplus)$ denote the additive group of integers modulo $n$. For each nonempty subset $S$ of $\MBZ_n\setminus\{0\}$, define $W(S,n)$ to be the digraph
\[
\bigl(\MBZ_n, \{(i,i\oplus 1) : i \in \MBZ_n\} \cup \{(n-1, s) : s \in S\}\bigr).
\]
The digraphs $W(S,n)$ may be viewed as slight generalizations of the well-known Wielandt digraph $W_n$ first defined (in the language of adjacency matrices) in~\cite[p. 648]{Wi50}; the digraph $W_n$  is precisely $W(\{1\},n)$; see Fig.~\ref{fig:wielandt}.
\begin{figure}[ht]
\centering
\begin{tikzpicture}[scale=0.99,
    >=stealth,
    vertex/.style={
        circle,
        draw,
        minimum size=10mm,
        inner sep=0pt
    },
    edge/.style={->,thick}
]

\node[vertex] (1) at (0,1.5) {$0$};
\node[vertex] (n) at (-2,0) {$n{-}1$};
\node[vertex] (2) at (2,0) {$1$};
\node[vertex] (nm1) at (-1.2,-2) {$n{-}2$};
\node[vertex] (3) at (1.2,-2) {$2$};

\draw[edge] (n) -- (1);
\draw[edge] (1) -- (2);
\draw[edge] (2) -- (3);
\draw[edge] (nm1) -- (n);
\draw[edge] (n) -- (2);

\node at (0,-2) {$\dots$};
\end{tikzpicture}
\caption{The Wielandt digraph $W_n=W(\{1\},n)$}
\label{fig:wielandt}
\end{figure}

It turns out that digraphs all of whose colorings are completely reachable fall in the family $W(S,n)$ up to simplification, that is, after gluing parallel edges into a single edge.

\begin{theorem}\label{thm:ACR}
 All colorings of a digraph $G$ with $n$ vertices are completely reachable if and only if the simplification of $G$ is isomorphic to the digraph $W(S,n)$ for a subset $S\subseteq\MBZ_n$ that generates the group $(\MBZ_n, \oplus)$.
\end{theorem}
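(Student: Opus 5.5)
Both directions reduce, via the simplification remark in Sect.~\ref{subsec:simple}, to the case where $G$ is simple. For necessity, I would start from Lemma~\ref{lem:uniquesplit}. It gives a unique vertex $w$ with more than one out-neighbor and a Hamiltonian cycle $C$. Every vertex other than $w$ has exactly one out-neighbor, so its only outgoing edge is its edge in $C$. Label the vertices along $C$ as $0,1,\dots,n-1$ with $w=n-1$, so that the edges of $C$ are $(i,i\oplus1)$. The only remaining edges are then the edges $(n-1,s)$ with $s\in S$ for some set $S\subseteq\MBZ_n$; I collect the edge $(n-1,0)$ into $C$ itself. Hence $\bar G\cong W(S,n)$, and $S\ne\emptyset$ because $w$ has at least two out-neighbors. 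It remains to show that $S\cup\{1\}$, or equivalently the relevant shifts, generate $\MBZ_n$. The cycles of $G$ through $w$ have lengths $n$ and $n-s$ for $s\in S$ (with $s=n-1$ giving a loop, and $s\ne 0$ by simplicity). Theorem~\ref{thm:CRC} says $G$ is aperiodic, so $\gcd(n,\{n-s:s\in S\})=1$, which is equivalent to $\gcd(n,S)=1$. That is exactly the statement that $S$ generates $(\MBZ_n,\oplus)$.

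For sufficiency, let $G=W(S,n)$ with $S$ generating $\MBZ_n$, and let $\alpha$ be any road coloring with alphabet $\Sigma$. Each vertex $i\ne n-1$ has a single out-edge, so every letter acts on $i$ as $i\mapsto i\oplus1$. At $n-1$ the letters split into classes. Letters sending $n-1$ to $0$ act as the cyclic permutation $c\colon i\mapsto i\oplus1$. A letter $b$ sending $n-1$ to $s\in S$ acts as $c$ except that it maps $n-1$ to $s$ instead of $0$. Such a letter has defect exactly $1$: its image misses $0$ and the point $s$ has two preimages. I would then split into cases.

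If some letter acts as $c$, the automaton has a cyclic permutation letter together with letters $b_s$ of defect $1$ for $s$ ranging over a subset $S'\subseteq S$. Here I expect to need $S'=S$ and $0\notin S$. Indeed, if the edge to $0$ is absent from $\bar G$ ($0\notin$ out-neighbors), then no letter is a permutation and every letter has defect $1$. So I would need a general criterion for complete reachability of automata in which every letter is "$c$ with one redirected point", and for this I would invoke \cite[Theorem~1]{BV16} or the Ferens--Szyku\l{}a characterization. Alternatively, I would argue directly. For a target set $T$, build a word backwards: I would show that every $k$-set is the image of some $(k+1)$-set, up to rotation by $c$-like letters. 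Applying $b_s$ to a set $P$ containing both $n-1$ and $s\ominus1$ merges them, and applying it to a set $P$ containing $n-1$ but not $s\ominus1$ replaces $n-1$ by $s$ while shifting everything else. The goal is an induction on $n-|T|$, where in each step I pick a letter and a preimage $P$ with $|P|=|T|+1$ and $P{\cdot}b=T$ up to a rotation reachable from the structure.

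The main obstacle is the subcase where no letter acts as a pure rotation, so that all letters have defect $1$ with fixed merged pairs, and the generating hypothesis on $S$ must be converted into the reachability of all sets. My first attempt would be to reduce to known results. The subsets $T$ reachable in one step from $P$ are described by the "extension" technique of \cite{BV16}: the set $T$ has a preimage of size $|T|+1$ under a word $u b_s$ whenever $T{\cdot}u^{-1}$ contains $s$ but not $0$. I would then show, using that the differences $\{s\}$ generate $\MBZ_n$, that every proper nonempty $T$ admits a rotation-like word $u$ (a product of letters acting as $c$ on the relevant points) with this property. Here generation is the precise condition preventing $T$ from being a union of cosets of the proper subgroup $\langle S\rangle$ that would stay stuck.
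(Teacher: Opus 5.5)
Your proposal is correct and follows the paper's route. Necessity comes from Lemma~\ref{lem:uniquesplit} together with aperiodicity (Theorem~\ref{thm:CRC}), which forces $\gcd(S\cup\{n\})=1$; your mention of $S\cup\{1\}$ is a slip, since that set always generates. Sufficiency comes from \cite[Theorem~1]{BV16}.

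However, the case you single out as the main obstacle cannot occur. The edge $(n-1,0)$ always belongs to $W(S,n)$, and every edge of a road coloring receives a nonempty label. So there is always a letter $c$ acting as the cyclic shift and, for every $s\in S$, a letter $b_s$; the "$S'=S$" you expected to need is automatic. Your direct argument then closes at once with the word $b_s c^j$, where $j\notin T$ and $j\oplus s\in T$. Such $j$ and $s$ exist for every proper nonempty $T$: otherwise $\MBZ_n\setminus T$ would be closed under adding elements of $S$, hence under $\langle S\rangle=\MBZ_n$, so $T$ would be empty.
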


\begin{proof}
 \emph{Necessity}. Lemma \ref{lem:uniquesplit} clearly implies that the simplification of the digraph $G$ is isomorphic to the digraph $W(S,n)$ for some subset $S$ of $\MBZ_n$. If the set $S$ generates a proper subgroup of the group $(\MBZ_n, \oplus)$, the greatest common divisor $d$ of the set $S\cup \{n\}$ is greater than 1. In this case, it is easy to see that $d$ divides the length of every cycle in $W(S,n)$ so that the period of $W(S,n)$ is not equal to 1, that is, $W(S,n)$ is not an aperiodic digraph. Then no completely reachable coloring of $W(S,n)$ exists by Theorem~\ref{thm:CRC}. Hence, $S$ must generate the entire group $(\MBZ_n, \oplus)$.

 \smallskip

\emph{Sufficiency}. Complete reachability of every coloring of each digraph $W(S,n)$ with the set $S$ generating the group $(\MBZ_n, \oplus)$ follows from \cite[Theorem~1]{BV16}.
\end{proof}

Clearly, the simplification $\bar{G}$ of a digraph $G=(V,E,s,t)$ can be constructed in $O(|E|)$ time. Checking whether $\bar{G}$ is isomorphic to a digraph $W(S,|V|)$ amounts to verifying that $\bar{G}$ is \scn{} and has a unique vertex of out-degree greater than~$1$, which can be done in $O(|V|+|E|)$ time. Finally, the subset $S\subseteq \MBZ_{|V|}$ generates the group $(\MBZ_{|V|}, \oplus)$ if and only if the greatest common divisor of the set $S\cup\{|V|\}$ is $1$, and this can be checked in $O(|V|)$ time using the Euclidean algorithm. Altogether, we see that Theorem~\ref{thm:ACR} yields an algorithm that determines whether all colorings of a given digraph $(V,E,s,t)$ are completely reachable in $O(|V|+|E|)$ time.

\smallskip

For comparison, we mention a recent result by D'Angeli and Rodaro~\cite{dAnRo26}, according to which the problem of determining whether all colorings of a given digraph are synchronizing is co-NP-complete.

\subsection*{Acknowledgments.} The authors thank the anonymous referees for their feedback and helpful suggestions, particularly for spotting a lacuna in the original proof of  Lemma \ref{lem:uniquesplit}.

\providecommand{\url}[1]{{#1}}
\providecommand{\urlprefix}{URL }
\expandafter\ifx\csname urlstyle\endcsname\relax
  \providecommand{\doi}[1]{DOI~\discretionary{}{}{}#1}\else
  \providecommand{\doi}{DOI~\discretionary{}{}{}\begingroup
  \urlstyle{rm}\Url}\fi

\end{document}